 \newtheorem{thm}{Theorem}[section]
 \newtheorem{prop}[thm]{Proposition}
 \theoremstyle{definition}
 \theoremstyle{remark}
 \numberwithin{equation}{section}
\begin{document}

%
%
%
%
%
%
%
%
%

\title[On covariant Poisson brackets in field theory]
 {On covariant Poisson brackets in field theory}

\author[A. A. Sharapov]{A. A. Sharapov}

\address{%
Tomsk State University\\
Physics Faculty\\
Lenin ave. 36\\
Tomsk 634050,
Russia}

\email{sharapov@phys.tsu.ru}

\thanks{This work was partially supported by the Tomsk State University Academic  D.I. Mendeleev Fund Program, the RFBR grant
13-02-00551 and  the Dynasty Foundation.}

\subjclass{Primary 70S05; Secondary 81Q65}

\keywords{Peierls bracket, covariant phase space, Lagrange anchor}

\date{January 1, 2014}

\begin{abstract}
A general approach is proposed to constructing covariant Poisson brackets in the space of histories of a classical field-theoretical model. The approach is based on the concept of Lagrange anchor, which was originally developed as a tool for path-integral quantization of Lagrangian and non-Lagrangian dynamics. The proposed covariant Poisson brackets generalize the Peierls' bracket construction known in the Lagrangian field theory.
\end{abstract}

\maketitle

\section{Introduction}

The least action principle provides the foundation for classical mechanics and field theory.
A distinguishing feature of  the Lagrangian equations of motion among other differential equations is that their  solution space carries a natural symplectic structure, making it into a phase space. The physical observables, being identified with the smooth function(al)s on the phase space, are then endowed with the structure of a Poisson algebra. There are at least two different ways for describing this Poisson algebra. The first one is the standard Hamiltonian formalism, which requires an explicit splitting of space-time into space and time and introduction of canonical momenta. The main drawback of this approach is the lack of manifest covariance, which  causes some complications in applying it to relativistic field theory.  An alternative approach was proposed by Peierls in his seminal 1952 paper \cite{P}. In that paper he invented what is now known as the Peierls brackets on the covariant phase space. In contrast to the usual (non-covariant) Hamiltonian formalism, where the phase space is identified with the space of initial data, the covariant phase space is the functional space consisting of all the trajectories obeying the Lagrangian equations of motion. Peierls' paper opened up the way for constructing a fully relativistic theory of quantum fields \cite{DWbook}.   For more recent discussions of the Peierls brackets, on different levels of rigor,  we refer the reader to \cite{M, FR, Kh}.

 In this paper, we explain how to extend the concept of covariant phase-space to the most  general (i.e., not necessarily Lagrangian) theories. Our approach is based on the notion of Lagrange anchor, which was originally proposed in \cite{KazLS} as a tool for path-integral quantization of Lagrangian and non-Lagrangian theories. In most cases the existence of a Lagrange anchor appears to be  less restrictive condition for the classical dynamics than the existence of an action functional. Furthermore, one and the same system of equations may admit a variety of different Lagrange anchors leading to inequivalent quantizations. In the next sections, we will show that any Lagrange anchor gives rise to a Poisson structure in the space of solutions to the classical equations of motion. The corresponding  Poisson brackets are fully covariant and reduce to the Peierls brackets in the case of Lagrangian theories endowed with the canonical Lagrange anchor. It is pertinent to note that for the mechanical systems described by ordinary differential equations in normal form, a relationship between the Lagrange anchors and Poisson brackets has been already established in \cite{KLS0}.

Our exposition is mostly focused on the algebraic and geometric aspects of the construction, while more subtle functional analytical details are either ignored or treated in a formal way. These details, however, are not specific to our problem  and can be studied, in principle, along the same lines as in the case of the conventional Peierls' brackets.

\section{Classical gauge systems}

\subsection{Kinematics} In modern language the  classical fields are just the sections of a locally trivial fiber bundle $B\rightarrow M$ over the space-time manifold $M$. The typical fiber $F$ of $B$ is called the target space of fields. In case the bundle is trivial, i.e., $B=M\times F$, the fields are merely  the mappings from $M$ to $F$. In each trivializing coordinate chart $U\subset M$ a  field $\varphi: M\rightarrow B$ is described by a collection of functions $\varphi^i(x)$, where $x\in U$ and $\varphi^i$ are local coordinates in $F$. These functions are often called the components of the field $\varphi$.

Formally, one can think of $\Gamma(B)$ -- the space of all field configurations -- as a smooth manifold $\mathcal{M}$ with the continuum infinity of dimensions and $\varphi^i(x)$ playing the role of local coordinates. In other words, the different local coordinates $\varphi^i(x)$ on $\mathcal{M}$ are labeled by the space-time point $x\in M$ and the discrete index $i$. To emphasize this interpretation of fields as coordinates on the infinite-dimensional manifold $\mathcal{M}$ we will include the space-time point $x$ into the discrete index $i$ and write $\varphi^i$ for $\varphi^i(x)$; in so doing, the summation over the ``superindex'' $i$ implies usual summation for its discrete part and integration over $M$ for $x$. In the physical literature this convention is known as DeWitt's condensed notation \cite{DWbook}.

Proceeding with the infinite-dimensional geometry above,  we identify  the ``smooth functions'' on the ``manifold'' $\mathcal{M}$ with the infinitely differentiable functionals of field $\varphi$. These functionals form a commutative algebra, which will be denoted by $\Phi$. If $\delta\varphi^i$ is an infinitesimal variation of field, then, according to the condensed notation, the corresponding variation of a functional $S\in \Phi$ can be written in the form
\begin{equation}\label{dS}
\delta S=S,_i\delta\varphi^i\,,
\end{equation}
where the comma denotes the functional derivative.

The concepts of vector fields, differential forms and exterior differentiation on $\mathcal{M}$ are naturally introduced through the functional derivatives, see e.g. \cite{KLS1}. In particular,  the variations $\delta \varphi^i$ span the space of 1-forms and the functional derivatives
$\delta/\delta\varphi^i$ define a basis in the tangent space  $T_\varphi \mathcal{M}$. So, we can speak of the tangent and cotangent bundles of $\mathcal{M}$.

The tangent and cotangent bundles are not the only vector bundles that can be defined over $\mathcal{M}$. Given a vector bundle $E\rightarrow M$ over the space-time manifold, we define the vector bundle $\mathcal{E}\rightarrow \mathcal{M}$ whose sections a smooth functionals of fields with values in $\Gamma(E)$. In other words, a section $\xi\in \Gamma(\mathcal{E})$ takes each field configuration $\varphi \in \mathcal{M}$ to a section $\xi[\varphi]\in \Gamma(E)$. Here we do not require the section $\xi[\varphi]$ to be smooth; discontinuous or even  distributional sections are also allowed. We will refer to $\mathcal{E}$ as the vector bundle associated with $E$. The dual vector bundle $\mathcal{E}^\ast$ is defined to be the vector bundle associated with $E^\ast$.

In order to justify our subsequent constructions some restrictions are to be imposed on the structure of the underlying space-time manifold. Our basic assumption will be that $M$ is a globally hyperbolic manifold endowed with a volume form. In the most of field-theoretical models both the structures come from a Lorentzian metric on $M$. The globally hyperbolic manifolds is a natural arena  for  the theory of hyperbolic differential equations with well-posed Cauchy problem. By definition, each  globally hyperbolic manifold  $M$ admits a global time function whose level surfaces provide a foliation of $M$ into space-like Cauchy surfaces $N$, so that $M\simeq \mathbb{R}\times N$. Using the direct product structure, one can cut $M$ into the ``past'' and the ``future'' with respect to a given instant of time $t\in \mathbb{R}$:
$$
M^-_t=(-\infty, t]\times N\,,\qquad M^+_t=[t,\infty )\times N\,,\qquad M=M_t^-\cup M_t^+\,.
$$

Given a vector bundle $E\rightarrow M$, we define the following subspaces in the space of sections $\Gamma(E)$:

\begin{itemize}
  \item $\Gamma_0(E)=\{\xi\in \Gamma(E)\,|\, \mbox{supp $\xi$ is compact}\}$;
  \item $\Gamma_{sc}(E)=\{\xi \in \Gamma(E)\,|\, \mbox{supp $\xi$ is spatially compact}\}$;
  \item $\Gamma_-(E)=\{\xi\in \Gamma_{sc}(E)\,|\, \mbox{supp $\xi \subset M_t^-$ for some $t$}\}$;
  \item $\Gamma_+(E)=\{\xi \in\Gamma_{sc}(E)\,|\,\mbox{supp $\xi\subset M^+_t$ for some $t$}\}$.
\end{itemize}
Here the spatially compact means that the intersection $M_t^-\cap \mathrm{supp} \,\xi \cap M_{t'}^+$ is compact for any $t\geq t'$.  We will refer to the elements of $\Gamma_-(E)$ and $\Gamma_+(E)$ as the sections with retarded and advanced support, respectively.

A differentiable functional $A$ is said to be compactly supported if $A,_{i}\in \Gamma_0(T^\ast\mathcal{M})$. For example, a local functional, like the action functional, is compactly supported if it is given by an integral over a compact  domain. The formally smooth and compactly supported functionals form a subalgebra $\Phi_0 \subset \Phi$.
Let now $\mathcal{E}$ be a vector bundle associated with $E$. We say that a section $\xi \in \Gamma(\mathcal{E})$ has retarded, advanced or compact support if $\xi[\varphi]\in \Gamma(E)$ does so for any field configuration $\varphi\in \mathcal{M}$. The sections with the mentioned support properties  form
subspaces in $\Gamma(\mathcal{E})$, which will be denoted by $\Gamma_-(\mathcal{E})$, $\Gamma_+(\mathcal{E})$, and $\Gamma_0(\mathcal{E})$, respectively.

When dealing with local field theories it is also useful to introduce the subspace of local sections $\Gamma_{loc}(\mathcal{E})\subset \Gamma(\mathcal{E})$. This consists of those   sections of $E$ whose components are given, in each coordinate chart, by smooth functions of the field $\varphi$ and its partial derivatives up to some finite order. For instance, the components of the Euler-Lagrange equations $S,_i=0$ constitute a section of $\Gamma_{loc}(T^\ast \mathcal{\mathcal{\mathcal{M}}})$.

\subsection{Dynamics}The dynamics of fields are specified by a set of differential equations
\begin{equation}\label{T}
T_a[\varphi]=0\,.
\end{equation}
Here $a$ is to be understood as including a space-time point. According to our definitions, the left hand sides of the equations can be viewed  as components of a local section of some vector bundle $\mathcal{E}$ over $\mathcal{M}$. We call $\mathcal{E}$ the \textit{dynamics bundle}. Since we do not assume the field equations (\ref{T}) to come from the least action principle, the discrete part of the condensed index $a$ may have nothing to do with that of $i$ labeling the field components. In the special case of Lagrangian systems, the dynamics bundle coincides with the cotangent bundle $T^\ast \mathcal{M}$ and the field equations  are determined by the exact 1-form  (\ref{dS}), with $S$ being the action functional.

Let $\Sigma$ denote the space of all solutions to the field equations (\ref{T}). Geometrically, we can think of $\Sigma$ as a smooth submanifold of $\mathcal{M}$ and refer to $\Sigma$ as the \textit{dynamical shell} or just the \textit{shell}. For the Lagrangian systems the shell is just the set of all stationary points of the action $S$. By referring  to  $\Sigma$ as a smooth submanifold we mean that the standard regularity conditions hold for the field equations \cite{H}.

 Given the shell, a functional $A\in \Phi_0$ is said to be trivial iff $A|_\Sigma=0$. Clearly, the trivial functionals form an ideal of the algebra $\Phi_0$. Denoting this ideal by  $\Phi_0^{\mathrm{triv}}$, we define the quotient-algebra $\Phi_0^\Sigma=\Phi_0/\Phi_0^{\mathrm{triv}}$. The regularity of the field equations  imply that for each trivial functional $A\in \Phi_0^{\mathrm{triv}}$  there exists a (distributional) section $\xi\in \Gamma(\mathcal{E}^\ast)$ such that $A=\xi^a T_a$.
 In other words, the trivial functionals are precisely those that are proportional to the equations of motion and their differential consequences. By definition, the elements of the algebra  $\Phi_0^\Sigma$ are given by the equivalence classes of functionals from $\Phi_0$, where two functionals $A$ and $B$ are considered to be equivalent if $A-B\in \Phi_0^{\mathrm{triv}}$. In that case we will write $A\approx B$. Formally, one can think of $\Phi_0^\Sigma$ as the space of smooth, compactly supported functionals on $\Sigma$.

\subsection{Gauge symmetries and physical observables}
The first functional derivatives of the field equations (\ref{T}) constitute the Jacobi operator $J$,
\begin{equation}\label{J}
J_{ai}=T_{a,i}\,.
\end{equation}
Geometrically, $J$ defines a homomorphism  from the tangent to the dynamics bundle.

The field equations (\ref{T}) are said to be \textit{gauge invariant} if there exist a vector bundle $\mathcal{F}\rightarrow \mathcal{M}$ together with a section $R=\{R_\alpha^i\}$ of $\mathcal{F}^\ast\otimes T\mathcal{M}$ such that
\begin{equation}\label{JR}
J_{ai}R^i_\alpha \approx 0\quad \Leftrightarrow \quad R_\alpha T_a=C_{\alpha a}^bT_b\,.
\end{equation}
In local field theory it is also assumed that $R_\alpha^i[\varphi]$ is the integral kernel of  a differential operator $R[\varphi]: \Gamma({\mathcal{F}})\rightarrow \Gamma(T{M})$ for each $\varphi \in \mathcal{M}$.

For the sake of simplicity we assume the bundle $\mathcal{F}$ to be trivial and consider  $R_\alpha=\{R_\alpha^i\}$ as a collection of vector fields on $\mathcal{M}$.  This vector fields are called the gauge symmetry generators. The terminology is justified by the fact that for any infinitesimal section $\varepsilon\in \Gamma_0(\mathcal{F})$ the infinitesimal change of field $\varphi^i\rightarrow \varphi^i +\delta_\varepsilon \varphi^i$, where $$
\delta_{\varepsilon}\varphi^i=R^i_\alpha\varepsilon^\alpha\,,
$$
 maps solutions of (\ref{T}) to solutions.  In other words, the vector fields $R_\alpha$ are tangent to the dynamical shell $\Sigma$. The gauge symmetry transformations are said to be trivial if $R\approx 0$.
 If the vector bundle $\mathcal{F}$ is big enough to accommodate all nontrivial gauge symmetries, then we call $\mathcal{F}$ the \textit{gauge algebra bundle} and refer to $R_\alpha$ as a complete set of gauge symmetry generators.  It follows from the definition (\ref{JR}) that the vector fields $R_\alpha$ define an on-shell involutive vector distribution on $\mathcal{M}$, i.e.,
 $$
 [R_\alpha, R_\beta]\approx C_{\alpha\beta}^\gamma R_\gamma\,,
 $$
 for some $C$'s. This distribution will be denoted by $\mathcal{R}$ and called \textit{gauge distribution}.

A functional $A\in \Phi_0$ is gauge invariant if
$$
A,_iR^i_\alpha\approx 0\,.
$$
In that case we say that $A$ represents a \textit{physical observable}. The gauge invariant functionals form a subalgebra $\Phi_0^{\mathrm{inv}}\subset \Phi_0$. Two gauge invariant functionals $A$ and $A'$ are considered as equivalent or represent the same physical observable if $A\approx A'$. So, we identify the physical observables
with the equivalence classes of gauge invariant functionals from $\Phi_0$. This definition is consistent as the trivial functionals are automatically gauge invariant and the property of being gauge invariant passes through the quotient $\Phi^{\mathrm{inv}}_0/\Phi_0^{\mathrm{triv}}$.  In what follows we will identify physical observables with their particular representatives in $\Phi^{\mathrm{inv}}_0$.

\section{The Lagrange anchor}\label{2}

 According to our definitions each classical field theory is completely specified by a pair  $(\mathcal{E}, T)$, where $\mathcal{E}\rightarrow \mathcal{M}$ is a vector bundle over the configuration space of fields and $T$ is a particular section of $\Gamma_{loc}(\mathcal{E})$. The solution space   $\Sigma$ is then identified with the zero locus of the section $T$.
 Whereas the classical equations of motion $T_a[\varphi]=0$ are enough to formulate the classical dynamics they are certainly insufficient for constructing a quantum-mechanical description of fields. Any quantization procedure has to involve one or another additional structure.  Within the path-integral
quantization, for instance, it is the action functional that plays the role of such an extra
structure. The procedure of canonical quantization relies on the Hamiltonian form of dynamics,
involving a non-degenerate Poisson bracket and a Hamiltonian. Either approach assumes the existence of a variational formulation for the classical equations of motion (the least action principle)
and becomes inapplicable beyond the scope of variational dynamics. The extension of these quantization methods to non-variational dynamics was proposed in \cite{KazLS}, \cite{LS0}.
In particular, the least action principle of the Lagrangian formalism was shown to admit a far-reaching  generalization based  on the concept of a \textit{Lagrange anchor}.

Like many fundamental concepts, the notion of a Lagrange anchor can be introduced and motivated from various  perspectives. Some of these motivations and interpretations can be found in Refs. \cite{KazLS}, \cite{LS1}, \cite{KLS1}.
For our present purposes it is convenient to define the Lagrange anchor $V$ as a linear operator making the on-shell commutative diagram
\begin{equation}\label{LD}
\xymatrix@C=0.5cm{
   \Gamma(T\mathcal{M}) \ar[rr]^{J} && \Gamma(\mathcal{E}) \\
  \Gamma(\mathcal{E}^\ast) \ar[u]_{V}\ar[rr]^{J^\ast} && \Gamma(T^\ast\mathcal{M})\ar[u]_{V^\ast}\\
  }
\end{equation}
Here $J^\ast$ and $V^\ast$ denote the formal dual of the operators $J$ and $V$.
The on-shell commutativity of the diagram means that
\begin{equation}\label{LAD}
J\circ V\approx V^\ast\circ J^\ast\,.
\end{equation}
Due to the regularity condition, the off-shell form of the last equality reads
\begin{equation}\label{LAD1}
J_{ai}V^i_b-V_a^iJ_{bi}=C_{ab}^d T_d
\end{equation}
for some $C$'s.

 In the case of Lagranian theories $\mathcal{E}=T^\ast \mathcal{M}$ and  we can take $V=1$.   Then   (\ref{LAD1}) reduces to the commutativity of the second functional derivatives, $J_{ij}=S,_{ij}=J_{ji}$. The operator $V=1$ is referred to as the \textit{canonical Lagrange anchor} for Lagrangian equations of motion. It should be noted that even for Lagrangian equations $S,_i=0$ there may exist non-canonical Lagrange anchors.

As with the generators of gauge symmetries, we can think of the Lagrange anchor as a collection of vector fields $V_a=\{V_a^i\}$ on $\mathcal{M}$. These generate a (singular) vector distribution $\mathcal{V}$, which we call the \textit{anchor distribution}. From the physical standpoint,  $\mathcal{V}$ defines the possible directions of  quantum fluctuations on $\mathcal{M}$. For the Lagrangian theories endowed with the canonical Lagrange anchor $V=1$ all directions are allowable and equivalent. At the other extreme we have zero Lagrange anchor, $V=0$, for which the corresponding quantum system remains pure classical (no quantum fluctuations). In the intermediate situation only a part of physical degrees of freedom may fluctuate and/or the intensity of  fluctuations around a particular field configuration $\varphi\in \mathcal{M}$ may vary with $\varphi$.

Unlike the gauge distribution $\mathcal{R}$, the anchor distribution $\mathcal{V}$ is not generally involutive even on shell. Nonetheless, using the regularity condition, one can derive the following commutation relations \cite{Sh}:
\begin{equation}\label{L1,2}
\begin{array}{l}
[V_a,V_b]^i\approx C_{ab}^dV^i_d+D_{ab}^\alpha R^i_\alpha +J_{aj}W_b^{ji}-J_{bj}W_a^{ji}\,,\\[3mm]
[R_\alpha, V_a]^i \approx C_{\alpha a}^bV^i_b+D_{\alpha a}^\beta R^i_\beta+J_{aj}W^{ji}_\alpha\,,
\end{array}
\end{equation}
where the coefficients  $W$'s are symmetric in $ij$ and the $C$'s are defined by Rels. (\ref{JR}) and (\ref{LAD1}). By definition, the coefficients $C_{ab}^d$ and $C_{\alpha a}^b$ are given by the integral kernels of trilinear differential operators, while the coefficients  $D$'s and $W$'s may well be nonlocal. Locality of the latter coefficients will be our additional assumption. It is automatically satisfied for the so-called \textit{integrable Lagrange anchors} as they were defined in \cite{KLS2}.  We will not dwell here on the concept of integrability of the Lagrange anchors referring the interested reader to the cited  paper.

\section{Covariant Poisson brackets}

The cornerstone of our construction is an \textit{advanced/retarded  fluctuation} $V_A^{\pm}$ caused by a physical observable $A$. By definition, $V_A^{\pm}$ is a vector field from $\Gamma_{\pm} (T\mathcal{M})$ satisfying the condition
\begin{equation}\label{fluct}
V_A^{\pm} T_a\approx V_a A\,.
\end{equation}
   It is not hard to see that the last equation  defines $V_A^{\pm}$ uniquely up to adding a vector field from $\mathcal{R}$ and on-shell vanishing terms \cite{Sh}.

Now we define the advanced/retarded brackets of two physical observables by the relation
\begin{equation}\label{PB}
\{A,B\}^{\pm}=V^{\pm}_AB-V^{\pm}_BA\,, \qquad \forall A, B\in \Phi^{\mathrm{inv}}_0\,.
\end{equation}
These brackets are well defined on shell as the ambiguity related to  the choice of the fluctuations,
\begin{equation}
V_A^\pm\quad \rightarrow\quad V_A^\pm+\xi^\alpha R_\alpha+T_a X^a \,,\qquad\xi\in \Gamma_\pm(\mathcal{F})\,,\quad X^a\in \Gamma_{\pm}(T\mathcal{M})\,,
\end{equation}
results in on-shell vanishing terms. Using Rels. (\ref{L1,2}) one can prove the following main statement.

\begin{prop}
{Brackets (\ref{PB}) map physical observables to physical observables and satisfy all the properties of Poisson brackets: antisymmetry, bi-linearity, the Leibnitz rule and Jacobi identity.}
\end{prop}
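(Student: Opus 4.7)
The plan is to verify each claimed property in turn. Antisymmetry is immediate from the definition $\{A,B\}^\pm = V_A^\pm B - V_B^\pm A$. Bi-linearity follows from the linearity in $A$ of the defining equation (\ref{fluct}), so that the fluctuation $V_A^\pm$ may be chosen to depend linearly on $A$; that the bracket of two compactly supported observables is again compactly supported follows from intersecting the support of $V_A^\pm \in \Gamma_\pm(T\mathcal{M})$ with that of $B_{,i}$. For the Leibniz rule, observe that $A V_B^\pm + B V_A^\pm$ is a valid fluctuation for the product $AB$, since $(A V_B^\pm + B V_A^\pm) T_a \approx A V_a B + B V_a A = V_a(AB)$. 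By the uniqueness of the fluctuation modulo the ambiguity displayed before the proposition, we may take $V_{AB}^\pm \approx A V_B^\pm + B V_A^\pm$; when acting on a third physical observable $C$, the residual freedom $\xi^\alpha R_\alpha + T_a X^a$ yields only on-shell vanishing terms because $R_\alpha C \approx 0$ and $T_a \approx 0$. A direct expansion of $\{C, AB\}^\pm$ then gives $\{C,A\}^\pm B + A \{C,B\}^\pm$.

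Next I would verify that $\{A,B\}^\pm$ is itself gauge invariant. Applying a gauge generator $R_\alpha$ and expanding $R_\alpha(V_A^\pm B) = [R_\alpha, V_A^\pm] B + V_A^\pm(R_\alpha B)$, the second term is computed from $R_\alpha B = \lambda^a T_a$ and (\ref{fluct}) to be $\lambda^a V_a A$ up to trivials, and a parallel computation on the other half of the bracket produces a cancelling contribution. For the commutator $[R_\alpha, V_A^\pm]$, I would apply $R_\alpha$ to the identity (\ref{fluct}) and invoke the second relation of (\ref{L1,2}) together with (\ref{JR}); this identifies $[R_\alpha, V_A^\pm]$ as a fluctuation whose source $R_\alpha A$ vanishes on shell, hence by the ambiguity freedom it is expressible as a linear combination of gauge generators and terms proportional to $T_a$. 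Acting on $B$, both types of terms land in $\Phi_0^{\mathrm{triv}}$ by gauge invariance of $B$. Collecting the contributions yields $R_\alpha \{A,B\}^\pm \approx 0$.

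The main obstacle is the Jacobi identity. I would expand the cyclic sum $\sum_{\mathrm{cyc}} \{A, \{B,C\}^\pm\}^\pm$ and split it into two groups: doubly-derived contributions of the form $V_A^\pm V_B^\pm C$, and fluctuation-of-bracket contributions $V_{\{B,C\}^\pm}^\pm A$. The cyclic antisymmetrization turns the first group into commutators $[V_A^\pm, V_B^\pm] C$, which are controlled by computing $[V_A^\pm, V_B^\pm] T_a$ and invoking the first relation of (\ref{L1,2}); the symmetry in $(i,j)$ of the coefficients $W$ is decisive here, since these symmetric pieces cancel against the symmetric terms that arise from the commutativity of mixed second functional derivatives when $V_a V_b$ acts on a scalar. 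The second group is handled by the Leibniz-type formula for $V_{\{B,C\}^\pm}^\pm$ derived in the same spirit as the Leibniz rule above, combined with the defining identity (\ref{fluct}). After full collection, the surviving terms are built from the $R_\alpha$'s, the $V_a$'s and the Jacobi operator $J$ multiplying the physical observables $A$, $B$, $C$, and therefore lie in $\Phi_0^{\mathrm{triv}}$. The locality of the coefficients $D$ and $W$ assumed after (\ref{L1,2}) is precisely the hypothesis that legitimizes the integrations by parts implicit in these rearrangements and ensures that the symmetric-$W$ cancellation actually goes through.
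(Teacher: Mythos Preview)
The paper does not actually prove this proposition: its entire proof reads ``See \cite{Sh}.'' So there is no in-paper argument to compare against, only a deferral to the companion article. Your outline is therefore already more detailed than what the paper offers, and it follows precisely the strategy one expects the cited reference to use: exploit the uniqueness of the fluctuation modulo $\mathcal{R}$ and on-shell trivial terms for the Leibniz rule, and invoke the structure relations (\ref{L1,2}) to control $[R_\alpha,V_A^\pm]$ and $[V_A^\pm,V_B^\pm]$ for gauge invariance and the Jacobi identity, with the symmetric-$W$ terms absorbing the second-derivative contributions.

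A couple of places in your sketch are slightly too quick. In the gauge-invariance step, when you compute $[R_\alpha,V_A^\pm]T_a$ via (\ref{fluct}) and the second line of (\ref{L1,2}), you pick up not only $V_a(R_\alpha A)$ but also a residual $J_{aj}W_\alpha^{ji}A_{,i}$; this extra piece is of the form $X\,T_a$ for the vector field $X^j=W_\alpha^{ji}A_{,i}$, so it is absorbed into the allowed ambiguity rather than cancelling against the ``other half'' as you suggest. Similarly, the asserted cancellation between $\lambda^a V_aA$ and the corresponding term from the $(A\leftrightarrow B)$ half is not automatic on its own; it is the full bookkeeping---including the $C_{\alpha a}^b$ and $W_\alpha$ pieces---that closes. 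These are refinements rather than genuine gaps: your identification of which structure relations drive each property, and of why the locality hypothesis on the $D$'s and $W$'s is needed, is correct.
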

\begin{proof} See \cite{Sh}.
\end{proof}

In \cite{Sh}, it was also shown that the advanced and retarded fluctuations are connected to each other by the following \textit{reciprocity relation}:
\begin{equation}\label{RecRel}
V^-_AB\approx V^+_BA\,.
\end{equation}
It just says that the retarded effect of $A$ on $B$ is equal to the advanced effect of $B$
on $A$, and vice versa. As an immediate consequence of (\ref{RecRel}) we obtain the on-shell equality
\begin{equation}\label{PB1}
\{A,B\}^\pm\approx \pm\tilde{V}_AB\,,
\end{equation}
where the difference $\tilde{V}_A=V^+_A-V^-_A$ is called the \textit{causal fluctuation}.

Let us now compare the covariant Poisson brackets (\ref{PB}) with the usual Peierls' brackets in  Lagrangian field theory. In the latter case the dynamics of fields are governed by some action functional $S[\varphi]$. As was explained in Sec. 3, the corresponding equations of motion $S,_i[\varphi]=0$ admit the canonical Lagrange anchor given by the unit operator $V=1$ on $\Gamma(T^\ast \mathcal{M})$. The definition of the advanced/retarded fluctuation (\ref{fluct}) takes the form
\begin{equation}\label{V=1}
V_A^{\pm i}S,_{ij}\approx A,_j\,.
\end{equation}
In the absence of gauge symmetries this equation can be solved for $V_A^{\pm}$ with the help of the advanced/retarded Green function $G^{\pm ij}$. By definition,
\begin{equation}\label{GF}
G^{\pm in}S,_{nj}=S,_{jn}G^{\pm ni}=\delta^i_j\quad \mbox{and}
\quad G^{- ij}= 0=G^{+ji} \quad \mbox{if}\quad j>i\,.
\end{equation}
Here $j>i$ means that the time associated with the index $i$ lies to the past of the time associated with the index $j$. Besides (\ref{GF}),  the advanced and retarded Green functions satisfy the so-called reciprocity relation $G^{\pm ij}= G^{\mp ji}$.
In terms of the Green functions  the advanced/retared solution to (\ref{V=1}) reads
\begin{equation}\label{VGA}
  V_A^{\pm i}=G^{\pm ij}A,_j\,.
\end{equation}
and the causal fluctuation takes the form $\tilde{V}^i_A=V^{+i}_A-V^{-i}_A=\tilde{G}^{ij}A,_j$, where the difference $\tilde{G}=G^+-G^-$ is known as the causal Green function.  In view of the reciprocity relation, $\tilde{G}^{ij}=-\tilde{G}^{ji}$.  Substituting  (\ref{VGA}) into (\ref{PB1}), we get
\begin{equation}\label{AGB}
  \{A,B\}^\pm=\pm A,_i\tilde{G}^{ij}B,_j\,.
\end{equation}
The antisymmetry of the brackets as well as the derivation property are obvious. The direct verification of the Jacobi identity for (\ref{AGB}) can be found in \cite{DWbook}.

\section{An example: the Pais-Uhlenbeck oscillator}

The  PU oscillator is described by the fourth-order differential equation
\begin{equation}\label{PU}
\left(\frac{d^2}{dt^2}+\omega^2_1\right)\left(\frac{d^2}{dt^2}+\omega_2^2\right)x=0\,,
\end{equation}
where the constants $\omega_{1}$ and $\omega_2$ have the meaning of frequencies.
The advanced/retarded Green function $G^\pm (t_2-t_1)$ for (\ref{PU}) is given by
$$
G^\pm(t)=\pm \frac{\theta(\mp t)}{\omega_2^2-\omega_1^2}\left(\frac{\sin \omega_1t}{\omega_1}-\frac{\sin\omega_2t}{\omega_2}\right)\,,
$$
with $\theta(t)$ being the Heaviside step function.

Equation (\ref{PU}) admits the two-parameter family of the Lagrange anchors \cite{KLSh}
\begin{equation}\label{VPU}
V= \alpha+\beta \frac{d^2}{dt^2}\,,\qquad \alpha,\beta \in \mathbb{R}\,.
\end{equation}
In this particular case the defining condition for the Lagrange anchor (\ref{LAD1}) reduces to the commutativity of the operator $V$ with the fourth-order differential operator defining the equation of motion  (\ref{PU}). Notice that the equation of motion (\ref{PU}) is Lagrangian and the canonical Lagrange anchor corresponds to $\alpha=1$, $\beta=0$.

The advanced Poisson brackets are given by
$$
\{x(t_1), x(t_2)\}^+= V\tilde{G}(t_1-t_2)
$$
$$=\left(\frac{\alpha-\beta\omega^2_1}{{\omega_2^2-\omega_1^2}}\right)\frac{\sin\omega_1(t_1-t_2)}{\omega_1}
-\left(\frac{\alpha-\beta\omega^2_2}{{\omega_2^2-\omega_1^2}}\right)\frac{\sin\omega_2(t_1-t_2)}{\omega_2}\,.$$

Differentiating by $t_1$, $t_2$ and setting $t_1=t_2$, we obtain the following  Poisson brackets of the phase-space variables $z=(x, \dot x, \ddot x, \dddot x)$:
\begin{equation}\label{PU-PB}
\begin{array}{c}
\{\dot x, x\}^+=\beta\,,\qquad \{\dot x, \ddot x\}^+=   \{\dddot x, x\}^+=\alpha-\beta(\omega_1^2+\omega_2^2)\,,\\[3mm] \{\ddot x, \dddot x\}^+=\alpha(\omega_1^2+\omega_2^2)-\beta(\omega_1^4+\omega_1^2\omega_2^2+\omega_2^4)\,,
\end{array}
\end{equation}
and the other brackets vanish. For $\alpha=1$, $\beta=0$, this yields  the standard Poisson brackets on the phase space of the PU oscillator.

With the Poisson brackets (\ref{PU-PB}) the equations of motion (\ref{PU}) can be written in the Hamiltonian form
$$
\dot z^i=\{H, z^i\}^+\,, \qquad i=1, 2, 3, 4\,,
$$
where the Hamiltonian is given by
$$
H=\frac12\frac{(\dddot x+\omega_1^2\dot x)^2+\omega_2^2(\ddot x+\omega_1^2 x)^2}{(\omega_1^2-\omega_2^2)(\alpha-\beta\omega^2_2)}-
\frac12\frac{(\dddot x+\omega_2^2\dot x)^2+\omega_1^2(\ddot x+\omega_2^2 x)^2}{(\omega_1^2-\omega_2^2)(\alpha-\beta\omega^2_1)}\,.
$$
As was first noticed in \cite{KLSh}, this Hamiltonian is positive definite whenever
\begin{equation}\label{ineq}
\omega_1^2>\frac{\alpha}{\beta}>\omega_2^2\,.
\end{equation}
Clearly, the canonical Lagrange anchor $(\alpha=1, \beta=0)$ does not satisfy these inequalities for any frequencies $\omega_{1,2}$. On the other hand, in the absence of resonance $(\omega_1\neq \omega_2)$, one can always choose the non-canonical Lagrange anchor (\ref{VPU}) to meet inequalities (\ref{ineq}). Upon quantization the positive-definite Hamiltonian will have a positive energy spectrum and a well-defined ground state.  The last property is crucial for the quantum stability of the system \cite{KLSh}. So, we see that non-canonical Lagrange anchors may offer  certain advantages over the canonical one, when the issuers of quantum stability of higher-derivative systems are concerned.

\subsection*{Acknowledgment}
I wish to thank Simon Lyakhovich for a fruitful collaboration at the early stage of this work.

\end{document}